\newtheorem{theorem}{Theorem}
\newtheorem{lemma}[theorem]{Lemma}
\newcommand{\ignore}[1]{}
\newcommand{\expect}[1]{\mathord{E}\mathord{\left[#1\right]}}
\newcommand{\ie}{i.\,e.\xspace}
\newcommand{\eg}{e.\,g.\xspace}
\newcommand{\wlo}{w.\,l.\,o.\,g.\xspace}
\newcommand{\Wlog}{W.\,l.\,o.\,g.\xspace}
\newcommand{\wrt}{w.\,r.\,t.\xspace}
\newcommand{\ea}{(1+1)~EA\xspace}
\newcommand{\EA}{(1+1)~EA\xspace}
\newcommand{\rls}{RLS\xspace}
\DeclareMathOperator{\Prob}{Pr}
\DeclareMathOperator{\Var}{Var}
\newcommand{\indic}[1]{\ensuremath{{\mathbf 1}}\{#1\}}
\renewcommand{\epsilon}{\varepsilon}
\DeclareRobustCommand{\qed}{\ifmmode\mathqed\else\leavevmode\unskip\penalty9999\hbox{}\nobreak\hfill\quad\hbox{\qedsymbol}\fi}
\let\QED@stack\@empty
\let\qed@elt\relax
\newcommand{\pushQED}[1]{\toks@{\qed@elt{#1}}\@temptokena\expandafter{\QED@stack}\xdef\QED@stack{\the\toks@\the\@temptokena}}
\newcommand{\popQED}{\begingroup\let\qed@elt\popQED@elt \QED@stack\relax\relax\endgroup}
\def\popQED@elt#1#2\relax{#1\gdef\QED@stack{#2}}
\newcommand{\qedhere}{\begingroup \let\mathqed\math@qedhere\let\qed@elt\setQED@elt \QED@stack\relax\relax \endgroup}
\newif\ifmeasuring@
\newif\iffirstchoice@ \firstchoice@true
\def\setQED@elt#1#2\relax{\ifmeasuring@\else \iffirstchoice@ \gdef\QED@stack{\qed@elt{}#2}\fi\fi#1}
\newcommand{\mathqed}{\quad\hbox{\qedsymbol}}
\def\linebox@qed{\hfil\hbox{\qedsymbol}\hfilneg}
\def\math@qedhere{\@ifundefined{\@currenvir @qed}{\qed@warning\quad\hbox{\qedsymbol}}{\@xp\aftergroup\csname\@currenvir @qed\endcsname}}
\def\displaymath@qed{\relax\ifmmode\ifinner\aftergroup\linebox@qed\else\eqno\let\eqno\relax \let\leqno\relax \let\veqno\relax\hbox{\qedsymbol}\fi\else\aftergroup\linebox@qed\fi}
\let\csname equation*@qed\endcsname\displaymath@qed
\def\equation@qed{
  \iftagsleft@\hbox{\phantom{\quad\qedsymbol}}\gdef\alt@tag{\rlap{\hbox to\displaywidth{\hfil\qedsymbol}}\global\let\alt@tag\@empty}
  \else\gdef\alt@tag{\global\let\alt@tag\@empty\vtop{\ialign{\hfil####\cr\tagform@\theequation\cr\qedsymbol\cr}}\setbox\z@}
  \fi
}
\def\qed@tag{\global\tag@true \nonumber&\omit\setboxz@h {\strut@ \qedsymbol}\tagsleft@false\place@tag@gather\kern-\tabskip\ifst@rred \else \global\@eqnswtrue \fi \global\advance\row@\@ne \cr}
\def\split@qed{\def\endsplit{\crcr\egroup \egroup \ctagsplit@false \rendsplit@\aftergroup\align@qed}}
\def\align@qed{\ifmeasuring@ \tag*{\qedsymbol}\else \let\math@cr@@@\qed@tag\fi}
\let\csname align*@qed\endcsname\align@qed
\let\csname gather*@qed\endcsname\align@qed
\def\@tempb#1 v#2.#3\@nil{#2}
\def\@tempa{TT}\else\def\@tempa{TF}\fi
\@tempa\renewcommand{\math@qedhere}{\quad\hbox{\qedsymbol}}\fi
\newcommand{\openbox}{\leavevmode\hbox to.77778em{\hfil\vrule\vbox to.675em{\hrule width.6em\vfil\hrule}\vrule\hfil}}
\DeclareRobustCommand{\textsquare}{\begingroup\usefont{U}{msa}{m}{n}\thr@@\endgroup}
\providecommand{\qedsymbol}{\openbox}
\newenvironment{proof}[1][\proofname]{\par\pushQED{\qed}\normalfont\topsep6\p@\@plus6\p@\relax\trivlist\item[\hskip\labelsep\itshape #1\@addpunct{.}]\ignorespaces}{\popQED\endtrivlist\@endpefalse}
\providecommand{\proofname}{Proof}
\begin{document}

\title{On the Runtime of Randomized Local Search and Simple Evolutionary Algorithms for Dynamic Makespan Scheduling}
\author{
 Frank Neumann\\
Optimisation and Logistics\\
School of Computer Science\\
The University of Adelaide\\
Adelaide, Australia
\and 
Carsten Witt\\
DTU Compute\\
Technical University of Denmark\\
2800 Kgs. Lyngby, Denmark
}

\maketitle
\begin{abstract}
Evolutionary algorithms have been frequently used for dynamic optimization problems. With this paper, we contribute to the theoretical understanding of this research area. We present the first computational complexity analysis of evolutionary algorithms for a dynamic variant of a classical combinatorial optimization problem, namely makespan scheduling.
We study the model of a strong adversary which is allowed to change one job at regular intervals. Furthermore, we investigate the setting of random changes.
 Our results show that randomized local search and a simple evolutionary algorithm are very effective in dynamically tracking changes made to the problem instance.

\end{abstract}

Optimization problems in real-world applications often change due to a changing environment. Evolutionary algorithms, ant colony optimization and other bio-inspired search heuristics have been frequently applied to dynamically changing problems.

An important approach to gain a theoretical understanding of evolutionary algorithms and other types of bio-inspired computation methods is the computational complexity analysis of these algorithms. During the last $20$ years, a large body of results and methods has been built up.
This includes the development of methods for the analysis of bio-inspired computing~\cite{DJWoneone,DBLP:journals/ai/HeY01,DBLP:journals/ai/HeY03,DBLP:conf/aaai/YuZ06}
and results for some of the best-known combinatorial optimization problems such as the traveling salesperson problem~\cite{Sutton2012tsp}, set cover~\cite{FriedrichHHNW10,DBLP:conf/ijcai/YuYZ13}, and makespan scheduling~\cite{Witt05,DBLP:conf/ppsn/SuttonN12} as well as different multi-objective problems~\cite{Neu07EJOR,DBLP:journals/ec/Horoba10,DBLP:journals/ai/QianYZ13}.
These studies often consider the algorithms called Randomized Local Search (\rls) and \EA, which we also investigate in this paper. Although these algorithms seem to be relatively simple, it should be noted that upper bounds on the expected optimization time of these algorithms can often be translated to population-based evolutionary algorithms with more complicated variation operators, \eg, crossover by increasing the upper bounds by only a linear factor with respect to population and problem size~\cite{DBLP:journals/ai/HeY03}.
 We refer the reader to
\cite{Neumann2010,Auger11,Jansen13} for comprehensive presentations of this research area. 

 In recent years, the computational complexity analysis of these algorithms on dynamically changing problems has gained increasing interest~\cite{RohlfshagenLehreYaoGECCO09,KotzingM12,OlivetoZargesGECCO13,LissovoiW14,LissovoiWTCS15,JansenZargesGECCO14}.
We study one of the classical combinatorial optimization problems, namely makespan scheduling on two machines.
 We consider \rls and \EA and analyze how they are able to keep track of changes that occur to the processing times of the given jobs.
In our investigations, we examine two models of dynamic changes where in each iteration at most the processing time of one job can be changed. In the adversary model, an adversary is able to change the processing time $p_i \in [L, U]$ of an arbitrary job $i$, possibly repeated at regular intervals. 
First, we show that even for very frequent and arbitrary changes, the algorithms are able to obtain solutions of discrepancy at most $U$ frequently during the run of the algorithm.
Afterwards, we show that \rls and \EA can maintain solutions of discrepancy at most~$U$ if the period of changes is not too small. 
In the random model, processing times are from the set $\{1, \ldots, n\}$ and an adversary is able to pick the job $i$ to be changed. The processing time $p_i$ of the chosen job is undergoing a random change and is either increased or decreased by $1$.
For the random model, we show that the \EA obtains solutions of discrepancy $O(\log n)$ in time $O(n^4 \log n)$ regardless of the initial solution and that the expected ratio between discrepancy and makespan is at most $6/n$ at least once in a phase of $O(n^{3/2})$ iterations.

The outline of the paper is as follows. We introduce the dynamic makespan problem and the algorithms under investigation in Section~\ref{sec:background}. Our analyses for the adversary model is presented in Section~\ref{sec:strong} and the random model is investigated in Section~\ref{sec:random}. Finally, we finish with some conclusions.

\section{Preliminaries}\label{sec:background}
We investigate the performance of randomized local search and a simple evolutionary algorithm for a dynamic version of the classical makespan problem. 
Given $n$ jobs and their processing times $p_i > 0$, $1 \leq i \leq n$, the goal is to assign each job to one of two machines $M_1$ and $M_2$ such that the makespan is minimized. 
A candidate solution is given by a vector $x \in \{0,1\}^n$, where job $i$ is assigned to machine~$M_1$ if $x_i=0$ and assigned to machine~$M_2$ if $x_i=1$, $1 \leq i \leq n$.

The makespan of a candidate solution $x$ is given by \[
f(x) = \max \left \{ \sum_{i=1}^n p_i (1- x_i), \sum_{i=1}^n p_i x_i \right \}
\]
and the goal is to find a solution $x^*$ which minimizes $f$. We denote by $|M_j|$ the load of machine $j=1,2$. We consider the dynamic version of the problem where exactly one job changes. 
We will also allow such changes to be repeated at regular intervals. 
We assume $p_i \in [L, U]$, $1 \leq i \leq n$, where $L$ is a lower bound on the processing time of any job and $U$ is an upper bound. We denote by $R = U/L$ the ratio between upper and lower bound.

\SetAlgoSkip{medskip}
\begin{algorithm2e}[t,h]
  \SetKwFor{For}{repeat}{}{}
  choose $x \in \{0,1\}^n$\;
  \While{stopping criteria not fullfilled}{%
    $y \gets x$\;
    flip one bit of $y$ chosen uniformly at random\;
    \lIf{$f(y) \leq f(x)$}{$x \gets y$}
  }
\caption{\rls.}
 \label{alg:rls}
\end{algorithm2e}

\SetAlgoSkip{medskip}
\begin{algorithm2e}[t,h]
  \SetKwFor{For}{repeat}{}{}
  choose $x \in \{0,1\}^n$\;
   \While{stopping criteria not fullfilled}{%
    $y \gets x$\;
    flip each bit of $y$ independently with prob.\ $1/n$\;
    \lIf{$f(y) \leq f(x)$}{$x \gets y$}
  }
\caption{\EA.}
 \label{alg:ea}
\end{algorithm2e}

Randomized local search (\rls) (see Algorithm~\ref{alg:rls}) starts with a candidate solution $x$ and produces in each iteration a new solution $y$ by flipping one randomly chosen bit of $x$. 
\EA (see Algorithm~\ref{alg:ea}) works with a more flexible mutation operator which flips each bit with probability $1/n$. The two introduced algorithms are standard benchmark algorithms in the area of runtime analysis of evolutionary computation~\cite{Neumann2010,Auger11,Jansen13}. While evolutionary algorithms usually work with a larger population and potentially also a crossover operator, usually positive statements on \EA transfer to elitist population-based evolutionary algorithms by losing only a polynomial factor dependent on the problem and population size~\cite{DBLP:journals/ai/HeY03}. This holds for all results obtained in this paper as well as long as there is in each iteration an inverse polynomial probability of selecting each individual of the parent population, selection 
does not accept worsenings of the worst fitness value from the population, and only the variation operator of \EA is applied.

We study the runtime behaviour of \rls and \EA on the introduced dynamic makespan scheduling problem and their ability to obtain solutions of good discrepancy. For our theoretical investigations, we do not consider any stopping criteria and measure  runtime by the number of iterations of the while-loop to achieve a solution of desired quality. The expected number of iterations is referred to as the expected time to reach the desired goal. 
In our investigations, we denote by 
\[
d(x)=\left\lvert\left(\sum_{i=1}^n p_i (1- x_i)\right) -  \left(\sum_{i=1}^n p_i x_i\right)\right\rvert,
\] the discrepancy of the solution $x$. We will study the expected time, for different scenarios, until \rls and \EA have produced solutions of small discrepancy.

We state an important property on the number of jobs on the fuller machine (\ie, the heavier loaded machine, which determines the makespan), 
which can easily be derived by taking into account the upper ($U$) and lower ($L$) bound on the processing times.

\begin{itemize}
\item Every solution has at least $\lceil(P/2)/U \rceil \geq \lceil (n/2)(L/U) \rceil =  \lceil(n/2)\cdot R^{-1} \rceil$ jobs on the fuller machine.
\end{itemize}

\section{Adversary Model}
\label{sec:strong}

In this section, we consider the case of a strong adversary. In one change, the adversary is allowed to pick one job $i$ to be changed and is able to choose an arbitrary new processing time $p_i \in [L, U]$.

\subsection{Obtaining a discrepancy of at most $U$}

We start our analysis by presenting upper bounds for \rls and \EA to obtain a discrepancy of at most $U$ from any starting solution.

\subsubsection{RLS}

We first consider \rls and show that the algorithm obtains a solution of discrepancy at most $U$ in expected time $O(n \min\{\log n, \log R\})$. This bound holds independently of the initial solution and the number of changes made by the adversary. The only requirement is that the adversary makes at most one change at a time. The proof uses the fact that for \rls the number of jobs on the fuller machine does not increase until the fuller machine switches.

\begin{theorem}
\label{theo:rls-strong}
The expected time until \rls has obtained a solution $x$ with $d(x) \leq U$ is $O(n \min\{\log n, \log R\})$ independently of the initial solution and the number of changes made by the adversary.
\end{theorem}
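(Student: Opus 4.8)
\noindent\emph{Proof proposal.} The plan is to use the number of jobs on the fuller machine as a potential function and to show that \rls drives it down geometrically fast until a solution of discrepancy at most~$U$ appears, no matter how the adversary acts. To this end, fix a solution $x$ with $d(x)>U$ and let, \wlo, $M_1$ be the fuller machine, carrying $k\ge 1$ jobs. Flipping the bit of a job $i\in M_1$, which has processing time $p_i\le U<d(x)$, yields the new loads $|M_1|-p_i$ and $|M_2|+p_i$, both strictly below the old makespan $|M_1|$ since $p_i<d(x)$; hence the flip is always accepted, it changes the discrepancy to $\lvert d(x)-2p_i\rvert<d(x)$, and it either keeps $M_1$ fuller (so $k$ drops by exactly one) or makes $M_2$ fuller, in which case the new discrepancy equals $2p_i-d(x)<2U-U=U$ and the goal is reached. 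A flip of a job on $M_2$, by contrast, only increases the makespan and is rejected. So while $d(x)>U$, the only accepted flips are those of the $k$ jobs on the fuller machine, and each of them either decreases $k$ by one or already achieves the goal.

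Next I would bring in the adversary. A single change of a processing time moves no job, hence leaves $k$ unchanged unless it switches which machine is fuller; and in that case the same one-line estimate as above (now with a load modified by at most~$U$) shows that the resulting discrepancy is at most~$U$, \ie the goal is hit. Consequently, until the goal is reached, the potential $k$ never increases, and in every iteration in which $d(x)>U$ it decreases by one with probability at least $k/n$, namely whenever \rls picks one of the $k$ jobs on the fuller machine for flipping.

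Finally I would sum the waiting times. Using the structural fact from Section~\ref{sec:background} that every solution keeps at least $\lceil(n/2)R^{-1}\rceil$ jobs on the fuller machine --- sharpened by the remark that $d(x)>U$ forces $|M_1|>(P+U)/2$, hence strictly more than $(n/2)R^{-1}$ such jobs --- once $k$ has shrunk to $\lceil(n/2)R^{-1}\rceil$ the next accepted flip is forced to switch the fuller machine (otherwise the resulting solution would violate the structural bound) and thus reaches the goal. So at most $k_0-\lceil(n/2)R^{-1}\rceil+1\le n$ decreasing steps occur, where $k_0\le n$ is the initial count, and the expected time is bounded by
\[
\sum_{k=\lceil(n/2)R^{-1}\rceil}^{k_0}\frac{n}{k}\;\le\;n\sum_{k=\lceil(n/2)R^{-1}\rceil}^{n}\frac{1}{k}\;=\;O(n\log R),
\]
and also by $\sum_{k=1}^{k_0}n/k=nH_n=O(n\log n)$; together these give $O(n\min\{\log n,\log R\})$.

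The step I expect to be the main obstacle is the interplay with the adversary: one has to argue carefully that no permitted adversary change can raise the potential $k$ without simultaneously forcing $d(x)\le U$, and that the ``forced switch'' at the bottom of the range --- together with corner cases such as $R>n/2$ or an initial solution that already satisfies $d(x)\le U$ --- is treated without gaps.
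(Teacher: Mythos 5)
Your proposal is correct and follows essentially the same route as the paper's proof: the number of jobs on the fuller machine is used as a non-increasing potential, a decreasing step occurs with probability $k/n$, the lower bound $\lceil (n/2)R^{-1}\rceil$ on the number of jobs on the fuller machine truncates the harmonic sum, and the switch of the fuller machine is shown to coincide with reaching discrepancy at most $U$. Your treatment of the switching step (showing the flip is always accepted when $d(x)>U$ and bounding the new discrepancy by $2p_i-d(x)<U$) is a slightly more direct version of the paper's $\alpha,\beta$ argument, but not a different approach.
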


\begin{proof}
We assume that we are starting with an arbitrary solution assigning the jobs to the two machines. 
Let $x$ be the current solution and consider in each point in time the fuller machine. The number of jobs on the fuller machine does not increase as this would lead to a larger discrepancy.

We claim that if the fuller machine switched (either by moving a single job or by a single change of the adversary) then a solution of discrepancy at most $U$ has been obtained in the step before and after the switch.
Note that moving one job to another machine changes the load on each machine by at most $U$ and that the adversary can change the load on each machine by at most $U-L$. So, the step switching the fuller machine (accepted or rejected) reduces the load on the fuller machine from $P/2 +\alpha$, where $P = \sum_{i=1}^n p_i$, to $P/2 - \beta$ where $\alpha + \beta \leq U$. This implies $\min\{\alpha, \beta\} \leq U/2$ and therefore a discrepancy of at most $U$ directly before and/or after the fuller machine has switched. Note, that such a step is only accepted by \rls iff $\beta \leq \alpha$ and that a discrepancy of at most $U$ has been obtained if the step is accepted. On the other hand, the case $\alpha < \beta$ which is rejected by \rls implies a discrepancy of at most $U$ before the switch.

The fuller machine has at least $\lceil(n/2)\cdot R^{-1} \rceil$ jobs. 
Let $k$ be the number of jobs on the fuller machine. Then the probability to reduce the number of jobs on the fuller machine is  $\frac{k}{n}$ and the expected waiting time for such a step is
 $n/k$. Summing up, the expected time to switch the fuller machine is at most
\[
\sum_{k= \max\{\lceil (n/2)\cdot R^{-1} \rceil, 1 \}}^n \frac{n}{k} 
\]
 
We have two cases. If $R \geq n/2$, the sum is at most $nH_n = O(n \log n)$, where $H_n$ is the $n$-th Harmnoic number. If $R < n/2$, the sum is
at most $n \ln n +1  - n \ln (n/(2R)) = O(n \log R)$. Altogether, after at most $O(n \min\{\log n, \log R\})$ steps a solution of discrepancy at most $U$ has been obtained.
\end{proof}

\subsubsection{(1+1) EA}

In Theorem~\ref{theo:rls-strong}, we exploited that accepted steps of RLS cannot increase the number of jobs on the fuller machines. In contrast, the \ea 
may move few big jobs from the fuller to the emptier machine and many small jobs the other way round if the accumulated effect of 
the step 
decreases the discrepancy. Such multiple-bit flips, which may increase the number of jobs on the fuller machine, arise in a similar 
way in the analysis of the \ea on linear functions, where they complicate the analysis considerably \cite{WittCPC13}. However, 
it is also known that the number of incorrectly set bits in the \ea (corresponding to the number of jobs on the fuller machine) 
has a drift towards~$0$. We are going to show that this drift leads in time $O(n^{3/2})$ to the situation that the fuller machine switches, 
which was analyzed in Theorem~\ref{theo:rls-strong}. We cannot show the bound $O(n\log n)$ using the advanced drift techniques from 
\cite{WittCPC13} since the dynamics of the job sizes do not allow us to use the potential function from the literature.

\begin{theorem}
\label{theo:ea-strong}
The expected time until the \ea has obtained a solution $x$ with $d(x) \leq U$ is $O(n^{3/2})$ 
independently of the initial solution and the number of changes made by the adversary.
\end{theorem}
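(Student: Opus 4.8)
The plan is to track the number $k$ of jobs currently placed on the fuller machine, show that it has a drift towards small values, and conclude that within expected time $O(n^{3/2})$ either $k$ has reached a level at which the fuller machine is forced to switch, or such a switch has already occurred; at that point the reasoning in the proof of Theorem~\ref{theo:rls-strong} applies and a solution with discrepancy at most~$U$ has been seen. The favourable step is the following: as long as the current solution $x$ has $d(x)>U$, moving any single job $i$ off the fuller machine does not increase the makespan (the new heavier load is $\max\{(P+d(x))/2-p_i,(P-d(x))/2+p_i\}$ and $p_i\le U<d(x)$), so the one-bit flip of job $i$ — performed by the \ea with probability $(1/n)(1-1/n)^{n-1}\ge 1/(en)$ — is accepted; afterwards either the fuller machine is unchanged, so that $k$ has dropped by one, or it has switched, so that the new discrepancy equals $2p_i-d(x)<2U-U=U$ and we are done. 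Hence in every step with $d(x)>U$ the value $k$ decreases by one (or the goal is reached) with probability at least $k/(en)$.

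Second — and here \rls and the \ea part ways — I would bound how quickly $k$ can grow. An accepted mutation does not increase the makespan, so, unless the fuller machine is swapped inside the step, it moves at least as much total processing time off the fuller machine as onto it; since all job sizes lie in $[L,U]$, raising $k$ by $j$ in one step therefore forces the removal of at least one job and the addition of $j+1$ others, \ie at least $j+2$ bit flips. Combined with the facts that a prescribed set of $\ell$ bits is flipped with probability at most $n^{-\ell}$ and that an increase of $k$ by $j$ can be produced in at most $\sum_{a\ge1}\binom{k}{a}\binom{n-k}{a+j}$ ways, this gives a bound on $\Pr[k\text{ increases by }j]$ that decays fast enough in $j$ so that the expected one-step increase of $k$ is $O(k/n)$, of the same order as the success probability $k/(en)$ of the favourable step. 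Steps that swap the fuller machine within a single mutation need a separate but easy argument, as such a step cannot increase the discrepancy.

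Third, I would feed these estimates into a drift theorem. Using a concave potential of $k$ — for instance $\phi(x)=\sqrt{k}$, so that the favourable step gains $\sqrt{k}-\sqrt{k-1}=\Theta(1/\sqrt k)$ while an upward jump of size $j$ costs only $O(j/\sqrt k)$ — the bounds above translate into an additive drift of $\phi$ towards $0$ of order $\Omega(1/n)$, uniformly over all states with $d(x)>U$; since $\phi\le\sqrt n$ to begin with, the additive drift theorem yields the expected time $O(\sqrt n\cdot n)=O(n^{3/2})$. The adversary only rewrites processing times, never relocating a job, and a change can at worst replace a solution with $d>U$ by another one of bounded discrepancy while leaving all of the above structure intact, so the bound holds independently of the number of changes.

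The main obstacle is the second step: showing that the negative drift contributed by accepted multi-bit mutations is genuinely dominated by — not merely of the same order as — the favourable one-bit step. The delicate point is that the acceptance test couples the flipped bits (one cannot pile many jobs onto the fuller machine without simultaneously removing enough total processing time), and one has to exploit this coupling, together with the fact that such steps tend to push the discrepancy below~$U$ anyway, to keep the net drift strictly positive. It is precisely because we may not weight the jobs according to their continually changing sizes, the way the potential function of~\cite{WittCPC13} does, that this route yields only $O(n^{3/2})$ rather than the $O(n\log n)$ of Theorem~\ref{theo:rls-strong}.
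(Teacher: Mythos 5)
Your overall strategy coincides with the paper's: put a drift on the number of jobs on the fuller machine and pass it through a concave potential to get $O(n^{3/2})$. But the step you yourself flag as ``the main obstacle'' is exactly the step that carries the whole proof, and your counting argument cannot close it. Bounding $\Pr[k\text{ increases by }j]$ by $\sum_{a\ge 1}\binom{k}{a}\binom{n-k}{a+j}n^{-(2a+j)}$ and summing gives an expected one-step increase of roughly $k(n-k)/n^2$, while your favourable event contributes only $k/(en)$; for all $k<(1-1/e)n$ the first bound exceeds the second, so these estimates do not even show that $k$ has nonnegative drift. The way out (used in the paper, following the PO-EA analyses of Jansen and Colin--Doerr--F\'erey) is not a finer enumeration but a structural observation: a job can join the fuller machine only in a step that is accepted, and acceptance forces at least one job to leave it, so the expected number of arrivals is at most a factor $(n-k)/n<1$ of the departure event rather than an independent additive term. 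This yields a net drift of order $\frac{k}{en}\bigl(1-\frac{n-k}{n}\bigr)=\frac{k^2}{en^2}$, which is the inequality everything else rests on; without it your argument stalls.

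Two further points would still break the proof even if that drift were granted. First, the potential $\phi(k)=\sqrt{k}$ is quantitatively too weak: its drift is $\Theta\bigl(k^{-1/2}/n\bigr)+\Theta\bigl(k^{3/2}/n^2\bigr)$, which bottoms out at $\Theta(n^{-5/4})$ near $k=\sqrt{n}$, so additive drift over a range of $\sqrt{n}$ gives only $O(n^{7/4})$, not $\Omega(1/n)$ drift and $O(n^{3/2})$ as you claim. The paper's potential is a two-piece function, $x(\ln\sqrt{n}+2-\ln x)$ for $x\le\sqrt{n}$ and $3\sqrt{n}-n/x$ above, calibrated so that the variance term $-g''(x)\ge 1/x$ dominates below $\sqrt{n}$ and the first-order term $g'(x)\cdot x^2/(en^2)$ dominates above, giving drift exactly $\Omega(1/n)$ everywhere. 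Second, tracking the \emph{current} fuller machine is not safe: an accepted multi-bit step can switch the fuller machine while leaving $d(x)>U$, in which case $k$ jumps from the old machine's job count to the new one's, increasing the potential by up to $\Theta(\sqrt{n})$ with probability as large as $\Theta(k^2/n^2)$ --- for large $k$ this expected increase swamps the $1/n$ drift, and it is not ``a separate but easy argument.'' The paper avoids this by fixing the initially fuller machine $M_1$, stopping at the first time its load drops to $P/2+U/2$, and using a conditioning argument (not flipping the surplus bits, probability at least $e^{-2}$) to handle overshoot at the stopping time.
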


\begin{proof}
We start with a given search point $x_0$, where the time index 
\wlo\ is~$0$. \Wlog, $M_1$ is the fuller machine \wrt~$x_0$. We write $\ell_t$ to denote 
the load of~$M_1$ after $t$~steps. Now, let $T$ denote the 
first point in time where $\ell_t \le P/2+U/2$. At this time, 
$M_1$ might still be the fuller machine, which implies a discrepancy at most~$U$. Only if $
\ell_T < P/2-U/2$, the discrepancy is greater than~$U$. Note that 
$\ell_{T-1}>P/2+U/2$ and each job size is at most~$U$. Each step resulting in $\ell_T<P/2-U/2$ 
must flip at least $2$~bits and 
 can be converted into a step resulting in $\ell_T\in[P/2-U/2,P/2+U/2]$ 
by conditioning on that a certain subset of bits do not flip. Note that the 
step defined by  the stopping time~$T$ may be required to flip already more than 
one bit to reach $\ell_{T}\le P/2+U/2$ or even no bits may flip at all if 
the adversary is responsible for reaching $\ell_{T}\le P/2+U/2$; in the latter case, 
already discrepancy at most~$U$ has been obtained.  Note also 
that flipping bits in addition to the ones required to reach $\ell_{T}\le P/2+U/2$
may result in a rejected step. If we condition on the step flipping as few additional bits 
as possible, we are guaranteed to enter the interval $[P/2-U/2,P/2+U/2]$ for the load 
of the fuller machine, resulting in an accepted step. 
The probability of not flipping a certain subset of bits is at least $(1-1/n)^n \ge e^{-2}$. Hence, if the step leading to  
time~$T$ flips more than the required bits, we repeat the 
following analysis and increase the expected time by a factor of at most~$e^2$. 

We denote by $N_1(x_t)$ the number of jobs on $M_1$ with respect to~$x_t$, the current 
search point after~$t$ steps.  
Based on this, we define the potential function
\[
d(x_t) := \begin{cases}
N_1(x_t) & \text{ if $t<T$}\\
0 & \text{ otherwise.}
\end{cases} 
\]
Hence, the potential function reflects the number of jobs on the fuller machine before time~$T$ and is set to~$0$ afterwards. 
As we have argued, the discrepancy at time~$T$ is at most $U$ with probabiblity at least $e^{-2}$.

The aim now is to bound $\expect{T}$, which is achieved by bounding  the expression
$
\expect{d(x_t)-d(x_{t+1}) \mid x_t;t<T}
$
from below and performing drift analysis. In what follows, we use the notation $X_t:=d(x_t)$. 

Since it is necessary to move at least one job from the fuller machine to change the $d$-value, which happens 
with probability at least~$1/(en)$ for each of these jobs, and 
each job on the emptier machine switches machine with probability at most $1/n$, we get 
the bound on the drift 
\begin{equation}
\expect{X_t-X_{t+1} \mid X_t; t<T}
\ge \frac{X_t}{en}\left(1-\frac{n-X_t}{n} \right) = \frac{X_t^2}{en^2},
\label{eq:bound-drift-xt}
\end{equation}
which is at least $1/(en^2)$. Hence, despite the fact that the number of jobs on the fuller may increase, 
its decreases in expectation. 
Since the maximal $d$-value is~$n$, we get $\expect{T}=O(n^3)$ by additive drift analysis~\cite{DBLP:journals/ai/HeY01}. However, 
the pessimistic process analyzed here has already been more closely investigated in the literature. 
It has been (apart from irrelevant differences in details) 
modeled by a process called PO-EA by \cite{JansenBrittleness}, which was recently 
revisited by \cite{ColinDF14GECCO14}. Using this analysis, the bound can be improved to $O(n^{3/2})$. 

In the following, we present a self-contained 
proof of the $O(n^{3/2})$ bound 
using a novel potential function that 
is easier to handle than the one proposed in the literature. 
Intuitively, our potential function exploits that the process mostly moves due to the variance 
of the one-step change (instead of the very small drift) in the regime $X_t\le \sqrt{n}$ 
whereas it is governed by the actual drift $\expect{X_t-X_{t+1}\mid X_t}$
when $X_t$ is above $\sqrt{n}$. 

For $x\ge 0$, let the potential function be 
\[
g(x) :=\begin{cases}
x(\ln(\sqrt{n})+2-\ln(x))  & \text{ if $x\le \sqrt{n}$,}\\
3\sqrt{n} - \frac{n}{x} & \text{otherwise}.
\end{cases}
\]
We note that $g(x)$ is monotone increasing and continuous on 
$[0,n]$. Moreover, the 
derivative satisfies 
\[
g'(x):=\frac{\mathrm{d}g}{\mathrm{d}x}  
=\begin{cases}
\ln(\sqrt{n})+1-\ln(x)  & \text{ if $x\le \sqrt{n}$,}\\
\frac{n}{x^2} & \text{otherwise}.
\end{cases}
\]
and is non-increasing and continuous as well. Hence, $g(x)$ is a concave function. The second derivative 
equals 
\[
g''(x):= \frac{\mathrm{d}^2 g}{\mathrm{d} x^2}  
=\begin{cases}
-1/x  & \text{ if $x\le \sqrt{n}$,}\\
-\frac{2n}{x^3} & \text{otherwise}.
\end{cases}
\]
and satisfies $g''(x)\le -1/x$ for $x\le n$.

By the mean-value 
theorem, we get for all $x\le n$ and 
for $y\ge 0$ that 
\begin{equation}
g(x) -g(x-y) \ge y g'(x) \ge g(x+y) - g(x).
\label{eq:g-first}
\end{equation}
Moreover, by developing Taylor expansions of $g(x-y)$ and 
$g(x+y)$ up to terms of fourth order, it is easy to see that 
\begin{equation}
(g(x) -g(x-y)) - (g(x+y) - g(x)) \ge 
-\frac{\mathrm{d}^2 g}{\mathrm{d} x^2} \ge \frac{1}{x}.
\label{eq:g-second}
\end{equation}

We are now going to analyze the drift of the process defined by $Y_t:=g(X_t)$. 
To this end, it is useful to decompose the drift 
 into a positive and negative part. 
Define  
\[
\Delta_X^-:=(X_t-X_{t+1})\cdot \indic{X_{t+1}\le X_t}
\]
and
\[
\Delta_X^+:=(X_{t+1}-X_{t})\cdot \indic{X_{t+1}\ge X_t}
\]
and accordingly $\Delta_Y^+$ and $\Delta_Y^-$ with respect to the $Y$-process. Note that 
$\expect{X_t-X_{t+1}\mid X_t} = \expect{\Delta_X^-\mid X_t} - \expect{\Delta_X^+\mid X_t}$ and 
accordingly for the drift of the $Y$-process.

Combining this decomposition with \eqref{eq:g-first}, we obtain 
\begin{align*}
& \expect{Y_t-Y_{t+1}\mid X_t} 
\\ 
& \ge g'(X_t)\cdot 
\expect{\Delta_X^-\mid X_t} - g'(X_t) \expect{\Delta_X^+\mid X_t} \\
&
=  g'(X_t) 
\expect{X_t-X_{t+1}\mid X_t} .
\end{align*}
If $X_t > \sqrt{n}$, plugging in the expression for $g'(X_t)$ and the bound 
\eqref{eq:bound-drift-xt}
yields 
\[
\expect{Y_t-Y_{t+1}\mid X_t}
\ge 
\frac{n}{X_t^2} \cdot \frac{X_t^2}{en^2} = \frac{1}{en},
\]
which does not depend on $X_t$. 

If $X_t\le \sqrt{n}$, we combine 
the decomposition with \eqref{eq:g-second} and get for some  value $a(X_t)$ that 
\begin{align*}
& \expect{Y_t-Y_{t+1} \mid X_t}  \\
& \quad \ge \left(a(X_t) + \frac{1}{X_t}\right)
\expect{\Delta_X^-\mid X_t} - a(X_t) \expect{\Delta_X^+ \mid X_t} \\
& \quad\ge \frac{\expect{\Delta_X^-\mid X_t}}{X_t} + a(X_t)
\expect{X_t-X_{t+1}\mid X_t}\\
&  \quad\ge \frac{\expect{\Delta_X^-\mid X_t}}{X_t}
\end{align*}
since $\expect{X_t-X_{t+1} \mid X_t}\ge 0$ according to \eqref{eq:bound-drift-xt}. 
Hence, we are left with a bound on $\expect{\Delta_X^-\mid X_t}$. 
Here we again argue that the number of jobs decreases by~$1$ if one of the $X_t$ jobs from the fuller 
machine moves 
and no other jobs moves. Consequently, $\expect{\Delta_X^-\mid X_t} \ge \frac{X_t}{en}$ and 
\begin{equation*}
\expect{Y_t-Y_{t+1} \mid X_t} \ge \frac{1}{en} 
\end{equation*} 
if   $X_t\le \sqrt{n}$. 
Together with the bound derived above, we have
$
\expect{Y_t-Y_{t+1} \mid X_t} \ge \frac{1}{en}
$
for every $X_t\le n$. Now, since $Y_0\le 3n^{1/2}$, the additive drift theorem
yields
$
\expect{T} \le 3en^{3/2} = O(n^{3/2})
$
as suggested.
\end{proof}

\subsection{Recovering a discrepancy of at most $U$}

We now consider the situation where the algorithms have obtained a solution of discrepancy at most $U$ and the processing time of one arbitrary 
job is changed afterwards. We show an upper bound of $O(\min\{R, n\})$ on the time needed to obtain a discrepancy of $U$ after this change.

\begin{theorem}
\label{thm:recompute}
Let $x$ be the current solution that has a discrepancy of at most $U$ before changing the processing time of a job on the fuller machine.
Then, the expected time of RLS and \ea to obtain a discrepancy of at most $U$ is $O(\min\{R, n\})$.
\end{theorem}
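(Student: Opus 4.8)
The plan is to first reduce the statement to its only non-trivial case and then set up a drift argument exploiting the same monotonicity phenomenon used in Theorem~\ref{theo:rls-strong}.

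First I would dispose of the easy cases. If the changed job's processing time \emph{decreases}, the load of the fuller machine drops by at most $U-L$, so the discrepancy stays at most $\max\{U,U-L\}=U$ and there is nothing to do. If it \emph{increases}, by some $\delta\le U-L$, the fuller machine remains the fuller one and the new discrepancy is at most $U+\delta\le 2U-L$; if that is already at most $U$ we are done. So the real task is: starting from a configuration with $U<d(x)\le 2U-L$, bound the expected time of \rls and the \ea until $d(x)\le U$. The key structural fact, exactly as in Theorem~\ref{theo:rls-strong}, is that while $d(x)>U$ every job on the fuller machine has size at most $U<d(x)$, so moving it to the emptier machine changes the discrepancy from $d$ to $\lvert d-2s\rvert<d$, a strict decrease; hence every such single-bit flip is accepted by \rls (and by the \ea whenever it is the sole bit that flips). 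Since the fuller machine carries load at least $P/2$ and each job has size at most $U$, it contains $N_1\ge\lceil(P/2)/U\rceil$ jobs, and each of them is such a ``good move''.

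Next I would run a drift argument on $\Phi:=\max\{0,d(x)-U\}$. There are two effects to combine. When $\Phi<2L$, i.e.\ $d(x)<U+2L$, one checks that \emph{any} single move of a job off the fuller machine already yields $d(x)<U$ (if the machine does not switch, the drop of $2s\ge 2L$ suffices; if it switches, the new discrepancy is $2s-d<2U-d<U$); so from that regime a geometric waiting-time argument finishes in expected $O(n/N_1)=O(R)$ steps, and this is also trivially $O(n)$. When $\Phi\ge 2L$, I would bound the one-step drift of $\Phi$ by averaging the decrease $\Phi-\max\{0,\lvert d-2s\rvert-U\}\ge\min\{2s,\Phi\}$ over the $N_1$ jobs on the fuller machine; since their sizes sum to the fuller load $\ge P/2\ge nL/2$, this yields a drift that is $\Omega(L)$ while $\Phi=\Omega(U)$ and at least $\Phi/\Theta(R)$ in general, and only $O(R/2)$ ``non-switching'' moves (each lowering $d(x)$ by at least $2L$) can occur before $d(x)\le U$, which simultaneously controls how far $N_1$ can shrink.

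The main obstacle is extracting the clean $O(\min\{R,n\})$ bound rather than the $O(R^2)$ that the naive combination of ``$O(R)$ moves'' times ``$O(n/N_1)$ steps per move'' gives, and the $O(R\log R)$ that a straightforward multiplicative/variable-drift analysis of $\Phi$ produces because the process passes through the $\log R$ scales of $\Phi$ between $\Theta(L)$ and $\Theta(U)$. Beating this requires charging the \emph{aggregate} drift over all jobs on the fuller machine at once (a uniformly random such job has expected size $\ge L$) instead of the pessimistic $2L$ per move, and most likely a tailored concave potential function in the spirit of the one used in Theorem~\ref{theo:ea-strong} to turn the remaining variable drift into additive drift, together with the trivial observation that $\Phi_0\le U-L$. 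For the \ea the only extra cost is a constant factor: conditioning on exactly the single relevant bit flipping succeeds with probability $(1-1/n)^{n-1}\ge 1/e$, precisely as exploited in the proof of Theorem~\ref{theo:ea-strong}, so its expected time is within a constant factor of that of \rls.
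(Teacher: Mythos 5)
Your reduction to a start discrepancy of at most $2U-L$, the observation that every single-job move off the fuller machine is accepted while $d(x)>U$, and the count of at least $\lceil (P/2)/U\rceil\ge n/(2R)$ such jobs are all correct and match the paper. The genuine gap is that you never close the main case $d(x)\ge U+2L$: you declare extracting $O(\min\{R,n\})$ there to be ``the main obstacle'' and defer it to an unspecified tailored concave potential. That obstacle is an artifact of your own choices, namely (i) truncating the potential at $U$ via $\Phi=\max\{0,d(x)-U\}$, which caps the contribution of a large job at $\Phi\le U-L$, and (ii) lower-bounding each job's contribution by $2L$, which throws away the total load. The paper needs neither a concave potential nor a case distinction. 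Work with $d(x)$ itself: moving job $i$ off the fuller machine decreases $d$ by $2p_i$ (as long as $2p_i\le d(x)$; if $2p_i>d(x)$, that single flip already lands at $\lvert d(x)-2p_i\rvert<2U-U=U$ and you are done), and summing over all $f$ jobs on the fuller machine gives $\sum_{i=1}^f 2p_i = 2\lvert M_1\rvert \ge P\ge nL$. Each such single-bit flip occurs with probability at least $1/(en)$, so $\expect{X_t-X_{t+1}\mid X_t}\ge P/(en)\ge L/e$ --- an additive drift that is $\Omega(L)$ \emph{uniformly}, not only ``while $\Phi=\Omega(U)$''. Since the distance to cover is at most $U-L$, additive drift immediately gives $e(R-1)=O(R)$.

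Your fear of an $O(R\log R)$ loss from multiplicative drift is likewise unfounded: apply multiplicative drift to $d$ (not to $\Phi$) with $s_{\min}=U$ and $s_0\le U+(U-L)$, so the logarithmic factor is $\ln(s_0/s_{\min})\le\ln 2=O(1)$; the process only has to shrink $d$ by a constant factor and never traverses ``the $\log R$ scales between $L$ and $U$''. Combined with $\expect{X_t-X_{t+1}\mid X_t}\ge (P+d(x))/(en)\ge d(x)/(en)$ this yields the $O(n)$ bound. So both bounds follow from the same one-line drift estimate via the two standard drift theorems; no new potential function, no charging scheme, and no bound on the number of ``non-switching moves'' is needed. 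As written, your proposal fully proves only the regime $d(x)<U+2L$ and leaves the complementary regime open.
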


\begin{proof}
We use multiplicative drift analysis~\cite{DBLP:journals/algorithmica/DoerrJW12} to show the $O(n)$ bound and consider drift according to the discrepancy $d(x)$. Let 
$P = \sum_{i=1}^n p_i$ and $X_t$ be the random variable for $d(x)$ of the search point $x$ at time $t\geq 0$.
With respect to the parameters from the multiplicative drift theorem, we have $s_0 \leq U + (U-L)$, $s_{\min} = U$ and therefore $s_0 / s_{\min} \leq 2$. \Wlog, let $1, \ldots, f$ be the jobs on the fuller machine and  $p_1, \ldots p_f$ be their processing times. Furthermore let $y(i)$ be the search point obtained by flipping the bit $i$ for $i=1, \ldots, f$. As long as the current solution $x$ has discrepancy greater than $U$, each of these single bit flips is accepted. We get
\begin{align*}
& \expect{X_t - X_{t+1}\mid X_t}   \geq  \frac{1}{n} \cdot \left(1 - \frac{1}{n} \right)^{n-1} \cdot \sum_{i=1}^f ( d(x) - d(y(i))\\
& \quad \geq  \frac{1}{en} \left(\sum_{i=1}^{f} 2 \cdot p_i \right)
 \geq  \frac{2}{en}  (P/2 + d(x)/2))\\
& \quad =  \frac{1}{en}  (P + d(x)) \; \geq  \;\frac{1}{en}  d(x).
\end{align*}
We set  $\delta = 1/(en)$ and get $en \ln (s_0/s_{\min}) \leq en \ln 2 = O(n)$ as an upper bound.

It remains to show the $O(R)$ bound.
From the previous calculation, we already have 
\[
\expect{X_t - X_{t+1}\mid X_t} \geq \frac{1}{en}  (P + d(x)) \geq P/(en).
\]
Using additive drift analysis, the expected time to reach a discrepancy of at most $U$ when starting with a solution $x$ with $d(x) \leq U + (U-L)$ is
\[
(U-L) / (P / (en)) \leq en(U-L)/ (nL) = e(R-1) = O(R).
\]
Altogether the upper bound is $O(\min\{R, n\})$, which completes the proof.
\end{proof}

The previous theorem implies that both algorithms are effectively tracking solutions of discrepancy $O(U)$ 
if the time where no changes to the processing times are happening is at least $c \cdot \min\{R, n\}$, where $c$ is an appropriate constant.
In particular,  changes happening every $c'n$ iterations where $c'$ is an appropriate constant can be tracked 
effectively regardless of the ratio $R=U/L$. Furthermore, a small   ratio $R$, \eg\ a constant, implies that 
very frequent changes (every $c''R$ iterations, $c''$ an appropriate constant) can be tracked by \rls and \EA. These statements 
 can be obtained by combining drift analysis 
with an averaging argument over a number of phases. Due to space restrictions, this analysis 
is not spelt out here.

\section{Random Model}
\label{sec:random}

We now consider a model with less adversarial power. Dynamic changes are still possible, but each change 
is limited in effect. More precisely, we consider a random model as common in the average-case 
analysis of algorithms \cite{Witt05}. For simplicity, we consider the model where 
all jobs sizes are in $\{1,\dots,n\}$; generalizations to other sets 
are possible. At each point of time, at most one job size can be changed by the adversary. The adversary 
can only choose the job to change, but neither amount or direction of change. If a job~$i$ is chosen 
to change, then its processing time changes from its current value~$p_i$ to one of the 
two values $p_{i}+1$ and $p_{i}-1$, each with probability~$1/2$. Two exceptions are made if $p_i=n$, 
which results in job size $n-1$, and if $p_i=1$, which results in job size~$2$ afterwards. In other 
words, the size of each job performs a fair random walk on $\{1,\dots,n\}$, with reflecting 
barriers. With respect to the initial job sizes, we consider both arbitrary (worst-case) initializations 
and the case that the sizes are drawn uniformly at random and independently 
from $\{1,\dots,n\}$.
Then each initial job size is $(n+1)/2$ in 
expectation.

It is useful to denote the random processing time of job~$i$ at time~$t$ by the random variable~$X_i(t)$. It 
is well known \cite{LPWMarkov} that the random walk described by the process $X_i(t)$, $t\ge 0$, has 
a stationary probability distribution given by 
\[
\lim_{t\to\infty} \Prob(X_i(t)=j)=
\begin{cases}
1/(2n-2) & \text{if $j=1$ or $j=n$}\\
1/(n-1) & \text{otherwise }
\end{cases}
\]
Hence, the probability values in the
 stationary distribution differ from the initial uniform distribution by a factor of at most~$2$. It is also well known 
in the theory of random walks 
 that the so-called mixing time (informally, the time to get sufficiently close to the 
stationary distribution) of the considered random walk is $O(n^2)$ steps. Hence,  
for any~$i, j\in\{1,\dots,n\}$ and for any $t\ge cn^2$, where 
$t$ denotes the number of changes to job~$p_i$ and $c$ is a sufficiently large constant,  
we have 
\[
\frac{c_1}{n} \le \Prob(X_i(t)=j)\le  \frac{c_2}{n}
\]
for two constants $c_1,c_2>0$. Hereinafter, we asssume this bracketing 
of $X_i(t)$ to hold, \ie, the mixing time has elapsed for every job.

The aim is to analyze the discrepancies obtainable in our model. 
We summarize in the following lemma a useful property of the distribution of the 
processing times~$X_i(t)$, and drop the time index for convenience. 
Roughly speaking, it shows that there are no big gaps in the 
set of values that is taken by at least one job.

\begin{lemma}
\label{lem:random-gap}
Let $\phi(i):=\lvert\{X_j\mid X_j=i \wedge j\in\{1,\dots,n\}\}\vert$, where $i\in\{1,\dots,n\}$, be 
the frequency of jobs size $i$. Let $G:=\max\{\ell\mid \exists i\colon \phi(i)=\phi(i+1)=\dots=\phi(i+\ell)=0\}$ 
the maximum gap size, \ie maximum size of intervals with zero frequency everywhere. Then, for 
some constant $c>0$, 
\[
\Prob(G\ge \ell)\le n 2^{-c\ell}.
\]
\end{lemma}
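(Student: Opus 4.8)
The plan is to run a union bound over all possible ``gap windows'' of length $\ell$, bounding for each fixed window the probability that it is completely unoccupied. First I would observe that the event $\{G\ge \ell\}$ entails the existence of an index $i$ with $i+\ell-1\le n$ such that $\phi(i)=\phi(i+1)=\dots=\phi(i+\ell-1)=0$; that is, no job currently has a size in the interval $I_i:=\{i,\dots,i+\ell-1\}$. Since any gap lies inside $\{1,\dots,n\}$ we have $G\le n-1$, so we may assume $\ell\le n$; for larger $\ell$ the claimed bound exceeds $1$ and is vacuous.

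Next I would fix one window $I_i$ and estimate $\Prob(\phi(k)=0\text{ for all }k\in I_i)$. The key point is that the job sizes $X_1,\dots,X_n$ are governed by independent random walks, hence are mutually independent, so this probability factorizes as $\prod_{j=1}^n \Prob(X_j\notin I_i)$. Using the lower-bound half of the post-mixing bracketing, $\Prob(X_j=k)\ge c_1/n$, we get $\Prob(X_j\in I_i)\ge c_1\ell/n$ and therefore $\Prob(X_j\notin I_i)\le 1-c_1\ell/n$; this factor is non-negative because $c_1\le 1$ (summing $\Prob(X_j=k)\ge c_1/n$ over the $n$ possible values forces $c_1\le 1$) and $\ell\le n$. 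Consequently $\Prob(\phi\equiv 0\text{ on }I_i)\le (1-c_1\ell/n)^n\le e^{-c_1\ell}$.

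Finally, there are at most $n-\ell+1\le n$ choices for the window, so a union bound over $i$ yields $\Prob(G\ge\ell)\le n e^{-c_1\ell}=n\,2^{-c\ell}$ with $c:=c_1/\ln 2>0$, as claimed. The argument is essentially a textbook first-moment estimate; the only places that need care — and the reason the lemma is stated for the post-mixing regime — are the use of the independence of the $n$ walks and of the lower-bound half of the stationary bracketing $\Prob(X_i=j)\ge c_1/n$, together with noticing that the inequality is trivially true once its right-hand side is at least $1$.
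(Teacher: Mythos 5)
Your proposal is correct and follows essentially the same argument as the paper: the lower bound $\Prob(X_j=k)\ge c_1/n$ from the post-mixing bracketing gives $\Prob(X_j\notin I)\le 1-c_1\ell/n$ for a fixed window $I$ of length $\ell$, independence of the $n$ walks yields $(1-c_1\ell/n)^n\le e^{-c_1\ell}$, and a union bound over the at most $n$ windows completes the proof. Your added remarks (the bound is vacuous for $\ell>n$, and $c_1\le 1$ so the factor is non-negative) are minor points of care that the paper leaves implicit.
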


\begin{proof}
Recall that we assume to be close to the stationary distribution, more precisely 
for each $i,j\in\{1,\dots,n\}$, we have 
$\Prob(X_j=i)\ge c_1/n$. By considering disjoint events, 
\[
\Prob(X_j \in\{i,\dots,i+\ell\})\ge \frac{c_1 \ell}{n}
\]
for each $\ell\le n$.

Then for each $\ell\ge 1$, we get from the 
independence of the job sizes that
\[
\Prob(\forall j\in\{1,\dots,n\}\colon X_j\notin \{i,\dots,i+\ell\}) \le \left(1-\frac{c_1 \ell}{n}\right)^n,
\]
which is at most  $c_3^\ell$ for some constant $c_3<1$. Hence, by a union bound the 
probability that there is an~$i$ such that 
for all $j\in\{1,\dots,n\}\colon X_j\notin \{i,\dots,i+\ell\}$ 
is at most $nc_3^\ell$, which 
equals
$n2^{-c\ell}$ for some constant~$c>0$.
\end{proof}

Hereinafter, \emph{with high probability} means 
probability at least $1-O(n^{-c})$ for 
any constant~$c>0$. We prove the following 
theorem stating that with high probability 
discrepancy $O(\log n)$ can be reached in polynomial time. Its proof is 
inspired by the average-case analysis from \cite{Witt05}. 
 Note also that the theorem 
is restricted to the \ea since its proof analyzes improvements made 
by swapping two jobs between the fuller and emptier machine.

\begin{theorem}
\label{theo:discrepancy-logn}
Regardless of the initial solution, the following claim holds: 
with high probability 
  the time for the \ea after a one-time change  
to obtain a discrepancy of at most $O(\log n)$ is $O(n^4 \log n)$.
\end{theorem}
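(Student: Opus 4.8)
The plan is to combine a drift argument on the discrepancy with the structural property from Lemma~\ref{lem:random-gap}, following the strategy of the average-case analysis in \cite{Witt05}. First I would observe that after the one-time change and after the mixing time has elapsed, with high probability the maximum gap~$G$ in the occupied job sizes is $O(\log n)$, by Lemma~\ref{lem:random-gap} with a sufficiently large constant in the exponent and a union bound over the (at most $n$) possible starting points of a gap. Conditioning on this event, every value in $\{1,\dots,n\}$ is within distance $O(\log n)$ of a size that is actually realized by some job.

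Next I would set up the drift. Let $X_t = d(x)$ denote the discrepancy of the current search point. As long as $X_t$ is large, say $X_t = \omega(\log n)$, I claim there is a swap of two jobs---one from the fuller machine, one from the emptier machine---that strictly decreases the discrepancy, and moreover decreases it by a constant fraction: given the no-big-gap property, one can find a job of size roughly $X_t/2$ on the fuller machine (or, more carefully, a pair of jobs whose size difference is close to $X_t/2$) and swap it with a much smaller job on the emptier machine, yielding a multiplicative decrease. Actually, the cleanest route is to mimic the classical PTAS-style argument: among the jobs on the fuller machine there must be one of size at least $X_t/n$ by averaging, but to get the multiplicative drift I would instead argue that the occupied sizes are dense enough (gaps $O(\log n)$) that for any target decrease $\Theta(X_t)$ there is a job on the fuller machine within $O(\log n)$ of that target, and a matching small job on the emptier machine; swapping them reduces the discrepancy from $X_t$ to $O(\log n)$ in one good step, or at least to $X_t/2 + O(\log n)$. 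The \ea performs any fixed such $2$-bit swap with probability $\Omega(1/n^2)$, so $\expect{X_t - X_{t+1} \mid X_t} = \Omega(X_t/n^2)$ once $X_t = \Omega(\log n)$. Multiplicative drift analysis~\cite{DBLP:journals/algorithmica/DoerrJW12} then gives expected time $O(n^2 \log(\text{initial discrepancy})) = O(n^2 \log n)$ to reach discrepancy $O(\log n)$, since the initial discrepancy is at most $nU \le n^2$.

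The remaining issue is that the drift bound is conditional on the no-big-gap event, which holds only once the mixing time has elapsed and which, over a long run, could in principle be violated by later random changes. Here I would use a restart/phase argument: partition time into phases, note that within each window of $O(n^4 \log n)$ steps (which is $\gg$ the $O(n^2)$ mixing time and $\gg$ the $O(n^2\log n)$ drift time even after inflating by constants and by the failure probability of the gap event), the algorithm reaches discrepancy $O(\log n)$ with high probability; since a failed phase has probability $O(n^{-c})$ for any constant~$c$ by Lemma~\ref{lem:random-gap}, inflating the time budget by a polylogarithmic factor (absorbed into the $\log n$) boosts the success probability to the required $1 - O(n^{-c})$. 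The main obstacle I anticipate is making the ``good swap decreases discrepancy by a constant fraction'' claim fully rigorous: one must handle the boundary regime where $X_t$ is comparable to the largest available job size, and ensure that the job chosen on the fuller machine together with the replacement on the emptier machine genuinely yields an \emph{accepted} step of the \ea (i.e.\ the net change is a decrease), which is where the no-big-gap property does the real work in guaranteeing a suitably sized job always exists on the heavy side.
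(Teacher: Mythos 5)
Your overall strategy---the no-big-gap property from Lemma~\ref{lem:random-gap} combined with a drift argument on the discrepancy driven by $2$-bit swaps---is the right one and matches the paper's. However, the quantitative heart of your argument, namely that a single swap decreases the discrepancy by a constant fraction (yielding $\expect{X_t-X_{t+1}\mid X_t}=\Omega(X_t/n^2)$ and hence $O(n^2\log n)$ time via multiplicative drift), is exactly the step you flag as the main obstacle, and it does not hold as stated. First, all job sizes are at most~$n$, so any $2$-bit flip changes the discrepancy by $O(n)$; since the theorem allows worst-case initializations with discrepancy up to $\Theta(n^2)$, in the entire regime $X_t=\omega(n)$ no single swap can achieve a decrease of $\Theta(X_t)$, so the claimed drift cannot come from a fixed swap there (one can partially rescue it via the aggregate effect of single-bit moves, but that is a different argument which you do not make). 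Second, even for $X_t=O(n)$, Lemma~\ref{lem:random-gap} constrains only the multiset of realized sizes, not their assignment to machines: the job whose size is within $O(\log n)$ of your target $X_j-X_t/2$ may well sit on the \emph{fuller} machine, so the existence of a suitable cross-machine pair with difference $\approx X_t/2$ requires an extra case analysis. The fact that your route would prove a bound roughly quadratically better than the $O(n^4\log n)$ claimed by the theorem should have been a warning sign.

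The repair is to settle for additive progress, which is what the paper does and which suffices for the stated bound: if the discrepancy exceeds the maximal gap $G=O(\log n)$, then there is either a pair $j$ (fuller), $j'$ (emptier) with $0<X_j-X_{j'}\le G$, or a job of size at most $G$ on the fuller machine (otherwise all jobs of size at least $G$ would lie on the fuller machine). The corresponding swap or single move is accepted, decreases the discrepancy by at least~$1$, and occurs with probability $\Omega(n^{-2})$; since the discrepancy is $O(n^2)$, this gives expected time $O(n^4)$ and, via Markov's inequality over $O(\log n)$ phases of length $cn^4$, the high-probability bound $O(n^4\log n)$. Finally, note that the theorem concerns a \emph{one-time} change, so the gap event is determined once and for all after that change; your concern that later random changes could destroy it, and the accompanying restart machinery, is unnecessary in this setting.
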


\begin{proof}
According to Lemma~\ref{lem:random-gap}, there is for any constant~$c>0$ a sufficiently large constant 
$c'>0$ such that 
there is not gap of size $G:=c'\log n$ or larger with probabiilty 
at least $1-n\cdot n^{-c-1}=1-n^{-c}$. In the following, we assume this maximum gap size to hold. 

If the current discrepancy is larger than $G$, then there must be 
either   
 at least one pair of jobs $j,j'$ with $j$ on the fuller machine and 
$j'$ on the emptier machine such that $X_{j'}<X_j$ and $X_j-X_{j'} \le G$,
or a job $j$ on the fuller machine of size of at most~$G$. 
To see this, imagine that despite the gap size of at most $G$, there 
is no such pair as in the first case. Then all jobs of size at least $G$ 
must be on the fuller machine, resulting in the second case.

Now, in the first case it is sufficient to swap jobs $j$ and~$j'$ to decrease 
the discrepancy by at least~$1$. In the second case, it is enough to move job $j$ 
from the fuller to the emptier machine to decrease the discrepancy by at least~$1$. 
In any case, the probability of decreasing the discrepancy is 
at least $(1-1/n)^{n-1} \frac{1}{n^2}=\Omega(n^{-2})$. Since 
the maximum discrepancy is $O(n^2)$, the expected number of decreases 
is also at most $O(n^2)$. Multiplying this with the waiting time for an improvement, 
we have an expected time of $O(n^4)$. By a simple application of Markov's inequality 
and repeating phases of length $cn^4$ for some constant~$c$, it is easy to see that 
the time is $O(n^4\log n)$ with high probability.
\end{proof}

The previous theorem covers  a worst-case initialization with all jobs on one machine, where the discrepancy can be up to $n^2$. Under random 
initialization, this is unlikely to happen, as the following theorem shows.

\begin{theorem}
The expected discrepancy of the random initial solution is 
$\Theta(n\sqrt{n})$. Under a random initial solution, 
  the time for the \ea after 
	a one-time change 
to obtain a discrepancy of $O(\log n)$ is $O(n^{3.5} \log^2 n)$ 
with high probability.
\end{theorem}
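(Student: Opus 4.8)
The plan is to treat the two assertions separately. For the \emph{expected discrepancy}, write $\sigma_i := 1-2x_i$, so that under a uniformly random initial assignment the $\sigma_i$ are independent Rademacher variables and $d(x)=\lvert\sum_{i=1}^n p_i\sigma_i\rvert$. I would bound this by the second- and fourth-moment method (a Khintchine-type argument). Since $\expect{\sigma_i}=0$ and $\sigma_i^2=1$, we get $\expect{d(x)^2}=\sum_{i=1}^n p_i^2$, and expanding the fourth power while discarding every term that contains an odd power of some $\sigma_i$ gives $\expect{d(x)^4}=\sum_i p_i^4+3\sum_{i\ne j}p_i^2 p_j^2\le 3\bigl(\sum_i p_i^2\bigr)^2$. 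H\"older's inequality then yields $\expect{d(x)}\ge \expect{d(x)^2}^{3/2}/\expect{d(x)^4}^{1/2}\ge \tfrac{1}{\sqrt 3}\sqrt{\sum_i p_i^2}$, while trivially $\expect{d(x)}\le \sqrt{\expect{d(x)^2}}=\sqrt{\sum_i p_i^2}$. For job sizes that are random (more generally, that have mixed essentially uniformly on $\{1,\dots,n\}$) one has $\expect{p_i^2}=\Theta(n^2)$, hence $\expect{\sum_i p_i^2}=\Theta(n^3)$, and $\sum_i p_i^2=\Theta(n^3)$ with overwhelming probability by a standard concentration bound (the summands lie in $[1,n^2]$). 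Taking expectations and combining, $\expect{d(x)}=\Theta(n^{3/2})=\Theta(n\sqrt n)$.

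For the \emph{runtime} claim, I would first control the starting discrepancy: since $p_i\le n$ always, $\sum_i p_i^2\le n^3$, so Hoeffding's inequality applied to $\sum_i p_i\sigma_i$ gives $\Prob(d(x)>t)\le 2\exp(-t^2/(2n^3))$, and choosing $t=\Theta(n^{3/2}\sqrt{\log n})$ makes this $n^{-\Omega(1)}$. Thus with high probability the random initial solution has discrepancy $O(n^{3/2}\sqrt{\log n})$. Conditioning on this event and on the event from Lemma~\ref{lem:random-gap} that there is no frequency gap larger than $G=c'\log n$ (which also holds with high probability, and is unaffected by a single change of one job size by~$1$), I would rerun the argument from the proof of Theorem~\ref{theo:discrepancy-logn}: the discrepancy of the \ea is non-increasing, and as long as it exceeds $G$ there is either a pair of jobs to swap or a single small job to move that decreases the discrepancy by at least~$1$, which the mutation operator realizes with probability $(1-1/n)^{n-1}\tfrac{1}{n^2}=\Omega(n^{-2})$. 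Hence the expected waiting time per improvement is $O(n^2)$, and since at most $O(n^{3/2}\sqrt{\log n})$ improvements are needed, the expected time to reach discrepancy $O(\log n)$ is $O(n^{3.5}\sqrt{\log n})$. A final application of Markov's inequality combined with restarting the analysis over $O(\log n)$ phases (the discrepancy never increases, so each phase may reuse the bound) turns this into a bound that holds with high probability and is in particular $O(n^{3.5}\log^2 n)$. The one-time change only alters $P$ and one job size by~$1$, so it affects none of these estimates.

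The main obstacle I anticipate is the lower bound $\expect{d(x)}=\Omega(n^{3/2})$: the upper bound $O(n^{3/2})$ is an immediate second-moment estimate, but ruling out that $\sum_i p_i\sigma_i$ is typically much smaller than its standard deviation needs an anti-concentration input such as the fourth-moment/H\"older computation above (or a Berry--Esseen-type argument), and it genuinely requires the job sizes to be spread out, i.e.\ $\sum_i p_i^2=\Theta(n^3)$, which is where the random (mixed) job-size assumption enters. A secondary technical point is the bookkeeping of the low-probability failure events --- an atypically large initial discrepancy, a large frequency gap, and the Markov-inequality failures across the $O(\log n)$ phases --- so that their total probability stays $n^{-\Omega(1)}$.
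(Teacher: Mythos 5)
Your proposal is correct, but for the first claim (the expected discrepancy of the random initial solution) you take a genuinely different route from the paper. You condition on the job sizes, write $d(x)=\lvert\sum_i p_i\sigma_i\rvert$ for independent Rademacher signs, and apply the second/fourth-moment (Khintchine-type) method to get the two-sided bound $\expect{d(x)\mid p}=\Theta\bigl(\sqrt{\sum_i p_i^2}\bigr)$, reducing everything to $\sum_i p_i^2=\Theta(n^3)$, which is where the randomness (or mixedness) of the job sizes enters. The paper instead decomposes the discrepancy as (imbalance in the \emph{number} of jobs between the machines, which is $\Theta(\sqrt n)$ in expectation by properties of $\mathrm{Bin}(n,1/2)$) times the mean job size $(n+1)/2$, and handles the upper bound and the high-probability bound via a normal approximation of the load of the ``majority'' machine. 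Your moment-method argument is more self-contained and rigorous on the lower-bound side --- the anti-concentration issue you flag is real, and the fourth-moment/H\"older step is exactly the right tool; the paper's sign bookkeeping around the majority machine and its appeal to a normal approximation are more informal. Your Hoeffding bound $\Prob(d(x)>t)\le 2\exp(-t^2/(2n^3))$ also yields a slightly sharper high-probability initial discrepancy of $O(n^{3/2}\sqrt{\log n})$ versus the paper's $O(n^{3/2}\log n)$; either suffices. For the runtime claim your argument coincides with the paper's: plug the improved initial-discrepancy bound into the machinery of Theorem~\ref{theo:discrepancy-logn} (gap lemma, $\Omega(n^{-2})$ probability per unit decrease, non-increasing discrepancy since $P$ is fixed after the one-time change, Markov plus $O(\log n)$ phases), giving a bound within $O(n^{3.5}\log^2 n)$. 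The only minor imprecision is the remark that the gap event is ``unaffected'' by a single unit change of one job size --- such a change can merge two gaps into one --- but this at most roughly doubles the maximum gap and is absorbed into the constant $c'$, so nothing breaks.
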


\begin{proof}
We prove that the initial discrepancy is $\Theta(n\sqrt{n})$ in expectation 
and $O(n\sqrt{n}\log n)$ with high probability. From the last property, the 
statement on the time to obtain a discrepancy of $O(\log n)$ follows 
with the same ideas as in the proof of Theorem~\ref{theo:discrepancy-logn} if 
the initial discrepancy is estimated with $O(n\sqrt{n}\log n)$ 
instead of $O(n^2)$. 

We are left with the proofs on the initial discrepancy. Let $K$ 
denote the number of jobs that are initially put on the first machine. 
Then $\expect{K}=n/2$ but also $\expect{\lvert K-n/2\rvert} \le \sqrt{\Var(K)} = 
\Theta(\sqrt{n})$, where we used Jensen's inequality and the 
fact that $K\sim \text{Bin}(n,1/2)$. Moreover, by the properties 
of the binomial distribution we have that $\Prob(K\ge n/2+c\sqrt{n})=\Omega(1)$ 
for some constant~$c>0$. Altogether, $\expect{\lvert K-n/2\rvert} = \Theta(\sqrt{n})$. 
In other words, there are in expectation $\Theta(\sqrt{n})$ more jobs on one machine than 
on the other.

Each job size is initially uniformly distributed on $\{1,\dots,n\}$ and has 
expectation $(n+1)/2$. By linearity of expectation, the discrepancy 
is at least $\Theta(\sqrt{n})(n+1)/2=\Theta(n\sqrt{n})$. This consideration 
just subtracts the total load of the machine having the minority of the jobs from the total load 
of the machine having the majority (hereinafter called machine ``majority''). Should this difference be negative, the discrepancy is 
still positive. However, by approximating the sum of the job sizes on machine ``majority'' by 
a normal distribution with standard deviation $\Theta(n\sqrt{n})$, one can also 
see that the discrepancy is $O(n\sqrt{n})$ even if machine ``majority'' is allowed
to have a total load less than the other machine. Altogether the expected discrepancy is $\Theta(n\sqrt{n})$. 

The statement that the discrepancy is $O(n\sqrt{n}\log n)$ with high probability 
follows by using Chernoff bounds on the difference in the number of jobs between 
the two machines (stating that there are at most $O(\sqrt{n}\log n)$ more jobs 
on one machine than the other), approximating the tails of the sum of the job 
sizes  
on the machines by a normal distribution and arguing that a deviation 
of  $c\sqrt{n}\log n$ from the mean has probability $e^{-\Omega(c^2)}$. 
\end{proof}

Finally, we turn to the case that job sizes change frequently. In the extreme 
case, at every point of time one job size is allowed to increase or 
decrease by~$1$. Then it seems hard to obtain a discrepancy of $O(\log n)$ as 
shown in Theorem~\ref{theo:discrepancy-logn}. However, we can apply the results 
from Section~\ref{sec:strong}, noting that the maximum job size is~$n$ at any time. In relation 
to the makespan, the discrepancy, which will also be at most~$n$, is negligible.

\begin{theorem}
In the model with random changes, the following holds: 
the expected time until the \ea (\rls) has obtained 
 a solution with discrepancy at most~$n$ is $O(n^{3/2})$ (respectively 
$O(n\log n)$)  independently
of the initial solution and the number of changes.
The expected ratio between discrepancy and makespan is 
at most $6/n$ then. 
\end{theorem}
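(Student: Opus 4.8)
\section*{Proof proposal for the final theorem}

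The plan is to read off the whole statement from the adversary-model results of Section~\ref{sec:strong}, specialized to $L=1$ and $U=n$, plus one short calculation on the total processing time. The key observation for the time bounds is that a random $\pm1$ step (with reflection at the barriers $1$ and $n$) is a special case of an admissible move of the strong adversary of Section~\ref{sec:strong}: at every point of time every job size lies in $\{1,\dots,n\}$, so $U=n$ is a valid upper and $L=1$ a valid lower bound throughout, with $R=U/L=n$. Hence Theorem~\ref{theo:rls-strong} applies verbatim and gives, for \rls, expected time $O(n\min\{\log n,\log R\})=O(n\log n)$ to reach discrepancy at most $U=n$, and Theorem~\ref{theo:ea-strong} gives $O(n^{3/2})$ for the \ea. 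Both underlying theorems assume nothing beyond ``at most one change per step'', so the bounds hold independently of the initial solution and of the number and timing of the changes. This settles the first claim.

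For the ratio, let $T$ be the first time a solution $x$ with $d(x)\le n$ is produced (so $\expect{T}=O(n^{3/2})$, resp.\ $O(n\log n)$), and write $P_t:=\sum_{i=1}^n X_i(t)$. Since the makespan of any solution equals $(P_t+d(x_t))/2\ge P_t/2$, at time $T$ we have
\[
\frac{d(x_T)}{f(x_T)}=\frac{2\,d(x_T)}{P_T+d(x_T)}\le \frac{2n}{P_T},
\]
using $d(x_T)\le n$ and monotonicity in the numerator. It therefore suffices to show that $P_T\ge n^2/3$ with overwhelming probability, since then the right-hand side is at most $6/(n+3)<6/n$, leaving $\Theta(n^{-2})$ slack for the rare complementary event (on which the ratio is trivially at most $2$, because $P_T\ge n$ and $d(x_T)\le n$).

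To control $P_T$ I would: (i) note that the stationary mean of each $X_i$ equals exactly $(n+1)/2$ (a one-line computation from the stated stationary distribution), so once the assumed mixing has elapsed $\expect{P_0}=(1/2-o(1))n^2$, while even the coarse bracketing $\Prob(X_i=j)\ge c_1/n$ already yields $\expect{P_0}=\Omega(n^2)$; (ii) apply Hoeffding to the sum of the $n$ independent job sizes in $[1,n]$ to get $P_0\ge (1/2-o(1))n^2$ except with probability $e^{-\Omega(n)}$; (iii) use $|P_{t+1}-P_t|\le 1$, hence $P_T\ge P_0-T$ deterministically, together with a tail bound $T=O(n^{3/2}\log n)=o(n^2)$ with probability $1-n^{-2}$, obtained by restarting the drift argument behind Theorem~\ref{theo:ea-strong} (resp.\ Theorem~\ref{theo:rls-strong}) and using that the potential is at most $3\sqrt n$ from every state. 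Combining (i)--(iii) gives $P_T\ge n^2/3$ with probability $1-O(n^{-2})$, and the bound $6/n$ follows for $n$ large enough.

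The applications of the earlier theorems and the Hoeffding estimate are routine. The one point that needs genuine care is that $P_T$ is evaluated at the \emph{random} stopping time $T$, which is correlated with the job-size process, so one cannot simply substitute the stationary mean; the deterministic inequality $P_T\ge P_0-T$ together with \emph{separate} tail bounds on $P_0$ and on $T$ is what decouples the two sources of randomness, and arranging this decoupling is the main obstacle I expect.
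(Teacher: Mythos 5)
Your proposal is correct and follows essentially the same route as the paper: the time bounds are read off from Theorems~\ref{theo:rls-strong} and~\ref{theo:ea-strong} with $L=1$, $U=R=n$, and the ratio bound comes from $d(x_T)\le n$ together with the concentration of the total load $P$ around $\Theta(n^2)$, so that the ratio is at most $2n/P\le 6/n$ up to a negligible contribution from the bad event. Your explicit decoupling of $P_T$ from the random stopping time via $P_T\ge P_0-T$ is in fact more careful than the paper, which simply asserts the concentration of the load ``at any time''.
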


\begin{proof}
Since $R=U=n$ in the notation of  Theorem~\ref{theo:rls-strong} and Theorem~\ref{theo:ea-strong}, 
we immediately obtain the 
first statement of our theorem. To compute the expected ratio, note that at any time the sum 
of all job sizes has an expected value of $n(n+1)/2$ and is at least $n^2/3+n$ 
with probability $1-2^{-\Omega(n)}$ using the approximation by Normal distribution. 
In this case, the makespan must be at least $n^2/6+n/2$, and the ratio is 
at most $n/(n^2/6+n/2) \le 6/n-3/n$. If 
the sum of the job sizes is less than $n^2/3+n$, then the ratio is at most $n/n$ since 
all job sizes are at least one. Altogether, the expected ratio is bounded from above by 
$6/n-3/n + 2^{-\Omega(n)} \le 6/n$.
\end{proof}

\section{Conclusions}
We have shown that randomized local search and evolutionary algorithms are provably
 successful in tracking solutions of good discrepancy for the dynamic makespan scheduling. 
Investigating the adversary model, we have shown that the algorithms obtain solutions 
of discrepancy at most $U$ every $O(n \log n)$ (for \rls) and every $O(n^{3/2})$ (for \EA) iterations 
even if changes are arbitrary and frequent. Furthermore, such a discrepancy is maintained 
if the period of changes is not too small. For the random model, we have shown that discrepancies of 
$O(\log n)$ are obtained and that a ratio of at most $6/n$ between discrepancy and makespan is obtained frequently during the optimization process.

\section*{Acknowledgements}
Frank Neumann has been supported by
the Australian Research Council (ARC) through grants  DP130104395 and DP140103400.
Carsten Witt has been supported by the Danish Council for Independent Research (DFF) through grant 4002-00542. 
We thank Mojgan Pourhassan and three anonymous reviewers for providing valuable feedback that helped to improve the paper.

\end{document}